\documentclass[conference,twoside]{IEEEtran}
\textheight 24.3 cm

\usepackage{siunitx}
\usepackage{graphicx}
\usepackage{amssymb}
\usepackage{amsmath}
\usepackage{algorithm}
\usepackage{algpseudocode}
\usepackage{cite}
\usepackage{stfloats}
\usepackage{epstopdf}
\usepackage{mdwlist}
\usepackage{threeparttable}
\usepackage[all,2cell]{xy} \UseAllTwocells
\usepackage{booktabs}
\usepackage{fancyvrb}
\usepackage{balance}
\usepackage{tensor}
\usepackage{leftidx}
\usepackage{mathabx}
\usepackage{color}
\usepackage{csquotes}
\usepackage{pdflscape}
\usepackage{afterpage}
\usepackage{array}
\usepackage{cases}
\usepackage{bm}

\newtheorem{theorem}{Theorem}
\newtheorem{lemma}{Lemma}
\newtheorem{corollary}{Corollary}
\newtheorem{proposition}{Proposition}

\newtheorem{remark}{Remark}

\AtBeginDocument{}

\newcommand{\erfc}{\mathrm{erfc}}

\newcommand{\diff}{\mathrm{d}}
\newcommand{\Pt}{P_\theta}

\newcommand{\atx}{a_\mathrm{tx}}
\newcommand{\arx}{a_\mathrm{rx}}

\newcommand{\Tx}{\mathrm{Tx}}
\newcommand{\Rx}{\mathrm{Rx}}
\newcommand{\TTx}{T_\mathrm{Tx }}
\newcommand{\TRx}{T_\mathrm{Rx }}
\newcommand{\Dt}{\Delta t}

\begin{document}
	
	\title{\huge
Diffusive Mobile MC for Controlled-Release Drug Delivery with Absorbing Receiver
	}
	
	\author{\IEEEauthorblockN{Trang Ngoc Cao\IEEEauthorrefmark{1}, Arman Ahmadzadeh\IEEEauthorrefmark{3}, Vahid Jamali\IEEEauthorrefmark{3}, Wayan Wicke\IEEEauthorrefmark{3}, \\ Phee Lep Yeoh\IEEEauthorrefmark{2}, Jamie Evans\IEEEauthorrefmark{1}, and Robert Schober\IEEEauthorrefmark{3}}
		\IEEEauthorblockA{\IEEEauthorrefmark{1}Department of Electrical and Electronic Engineering, The University of Melbourne, Australia}
		\IEEEauthorblockA{\IEEEauthorrefmark{2}School of Electrical and Information Engineering, The University of Sydney, Australia}
		\IEEEauthorblockA{\IEEEauthorrefmark{3}Institute for Digital Communications, Friedrich-Alexander-University Erlangen-Nurnberg (FAU), Germany}
	}

	\maketitle
	\begin{abstract}

		Nanoparticle drug carriers  play an important role in facilitating efficient  targeted drug delivery, i.e., improving  treatment success and reducing drug costs and side effects. However, the mobility of nanoparticle drug carriers poses a challenge in designing drug delivery systems. Moreover, healing results critically depend on the rate and time duration of drug absorption.
		Therefore, in this paper, we aim to design  a controlled-release drug delivery system with a mobile drug carrier that minimizes the total amount of released drugs while ensuring a desired rate of drug absorption   during a prescribed time period.
		We model the mobile drug carrier as a mobile transmitter, the targeted diseased cells as an absorbing receiver, and the channel between the transceivers  as  a time-variant  channel since  the carrier mobility results in a time-variant absorption rate of the drug molecules. Based on this, we   develop a molecular communication (MC) framework to design the controlled-release drug delivery system. 
		In particular, we  develop new analytical expressions for the mean, variance, probability density function, and cumulative distribution function of the channel impulse response (CIR). 
		Equipped with the statistical analysis of the CIR, we design and evaluate the performance of the controlled-release drug delivery system.
		Numerical results show significant savings in the amount of released drugs compared to a constant-release rate design and reveal the necessity of accounting for  drug carrier mobility for reliable  drug delivery.
	\end{abstract}
	\section{Introduction}
	

	
	In  drug delivery systems,  drug molecules are carried to the diseased cell site by  nanoparticle carriers, so that the drug is efficiently delivered to the targeted site and does not affect healthy  cells \cite{CMM:17:ST}. Experimental and theoretical studies have indicated that not only the total drug dosage but also the rate and time period of drug absorption by the diseased cell receptors are critical factors in the healing process \cite{LEELYP:15:CES,KS:14:DD}.  
	Therefore, it is important to design drug delivery systems with controlled release to minimize the total amount of released drugs while achieving a desired rate of drug absorption  at the diseased site during a prescribed time period. 
	
	To this end, the mobility of  drug carriers has to be accurately modeled due to the fact that  after being injected or extravasated from the cardiovascular system into the tissue surrounding a targeted diseased cell site, the drug carriers may not be anchored at the targeted site but may move, mostly via diffusion \cite{SSR:14:BE,PNJ:99:BPJ, LEELYP:15:CES,WCX:14:PCL}. The diffusion of the drug carriers results in a time-variant absorption rate of drugs  even if the  drug release rate is constant. 
	
	The challenge of designing a controlled-release drug delivery system has been tackled from two perspectives, namely mathematical modeling \cite{ARI:06:ADD} and experiments in vitro and vivo \cite{FWR:11:JP}. In particular, the mathematical approach helps explain the experimental observations and can guide the  experiments. Recently, researchers have started to design drug delivery systems based on the molecular communications (MC) paradigm where  drug carriers are modeled as transmitters,  diseased cells are modeled as  receivers, and   drug absorption is modeled as a random channel~\cite{FYE:16:CSTO}. Controlled-release designs based on an MC framework  were proposed  in \cite{CPA:15:BME,FRV:15:TNB,SMA:17:NCN,SMH:18:NB}. However, in these works, the transceivers were fixed and only the movement of  drug particles was considered. In contrast, in this paper, we account for the mobility of the transmitter and analyze the resulting time-variant MC channel  to optimize the controlled-release design. We note that time-variant MC channels with mobile transceivers were also considered in  \cite{AJS:18:COMT,HAA:17:CL,VHG:18:Arxiv}. In \cite{AJS:18:COMT}, a theory for stochastic time-variant channels in mobile diffusive MC systems was developed. However, a passive receiver model was used in \cite{AJS:18:COMT}, which may not be suitable for modeling drug delivery systems since the effect of drug absorption  cannot be captured.  A diffusive absorbing receiver and the average distribution of the first hitting time, i.e., the mean of the channel impulse response (CIR),  were derived for a one-dimensional environment without drift in \cite{HAA:17:CL} and with drift in \cite{VHG:18:Arxiv}. Clearly, none of these works provides a complete statistical analysis of
	the three-dimensional (3D) time-variant channel with an absorbing receiver
	nor do they consider drug delivery systems.

	In this paper, we analyze the 3D time-variant channel with  diffusive mobile transmitter and  absorbing receiver for a controlled-release drug delivery system. The main contributions are as follows:
	\begin{itemize}
		\item We design a controlled-release profile that  minimizes the amount of released drugs while ensuring that the absorption rate at the diseased cells  does not fall below a prescribed threshold for a given period of time. Our proposed design requires a significantly lower  amount of released drugs  compared to a design with  constant-release rate.
		\item We derive the first-order (mean) and second-order (variance) moments of the time-variant CIR and exploit them for the design of the controlled-release profile.
		\item We derive  the probability density function (PDF) and the cumulative distribution function (CDF) of the time-variant CIR for evaluation of the system performance.
		\item The performance of the controlled-release system is evaluated in terms of the probability that the absorption rate exceeds a targeted threshold.  Our results reveal that considering transmitter mobility is crucial for meeting the system requirements.
	\end{itemize}
	We note that whilst this paper focuses on  drug delivery, the derived analytical results  for the mean, variance, PDF, and CDF of the time-variant CIR  are expected to be also useful for other MC applications.	
	
	The remainder of this paper is organized as follows. In Section II, we introduce the system model. In Section III, we design  the controlled-release profile for a drug delivery system based on the mean and the variance of the time-variant CIR. In Section IV, we evaluate the performance of the controlled-release drug delivery system in terms of the probability that the absorption rate exceeds a target threshold. Numerical results are presented in Section V and  Section VI concludes the paper.

	\section{System and Channel Model}
	
	\begin{figure}[t!] 		
		\centering
		\includegraphics[width=0.45\textwidth]{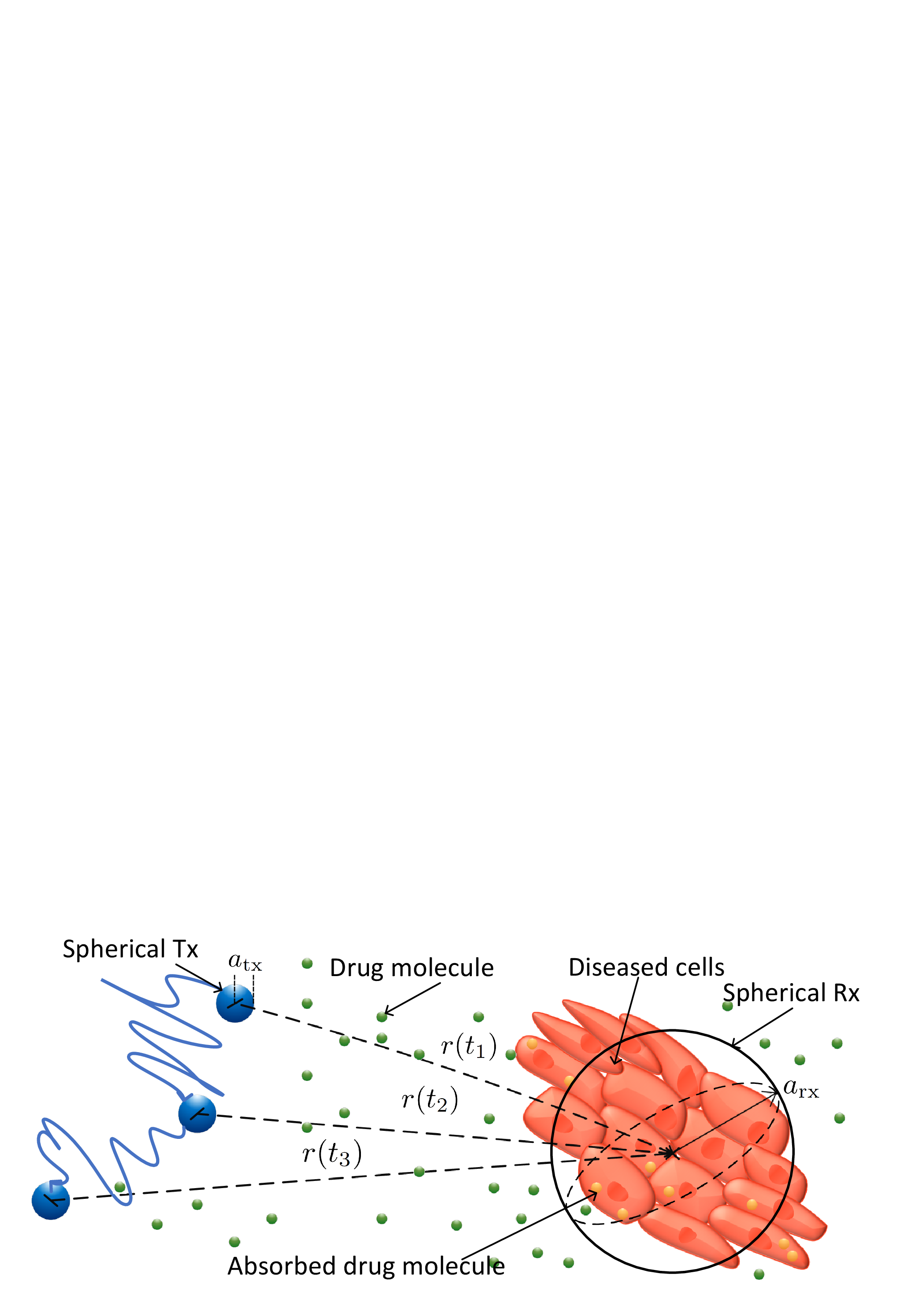}
		\caption{
			System model of controlled-release drug delivery with a diffusive transmitter ($\Tx$) and an absorbing receiver ($\Rx$).  
		}
		\label{fig:1}
	\end{figure}
	
	In this section, we introduce the diffusive mobile MC system model and define the time-variant CIR of the absorbing receiver.
	
	\subsection{System Model}
	
	We apply an MC design framework to model, analyze, and optimize a controlled-release drug delivery system, see Fig.~\ref{fig:1}.
	The drug delivery system comprises a drug carrier releasing drug molecules and diseased cells absorbing them. We model the system environment as an unbounded 3D diffusion environment with constant temperature and viscosity. 
	The drug carriers in  drug delivery systems are typically  nanoparticles, such as  spherical polymers or polymer chains, having a size not smaller than $\SI{100}{\nano\metre}$  \cite{PNJ:99:BPJ}. Hence, we  model the drug carrier as a spherical transmitter, denoted by $\Tx$, with radius $\atx$. 
	Furthermore, we model  the $\Tx$ as  transparent, i.e., it does not have any effect on the receiver or drug molecules after they are released from its center. This model is valid since in reality the drug carrier is designed to carry drug molecules and interaction with the drug or the receiver is not intended.  
	The drug carriers can be directly injected or extravasated from the blood to the interstitial tissue near the diseased cells (e.g. a tumor), where they start to move. The movement of the drug carrying nanoparticles in the tissue is caused by diffusion and convection mechanisms but diffusion is expected to be  dominant in  most cases \cite{SSR:14:BE,PNJ:99:BPJ, LEELYP:15:CES,WCX:14:PCL}. At the tumor site, the drug carrier releases drug molecules of type $\mathrm{X}$, which also diffuse in the tissue \cite{LEELYP:15:CES}. Hence, we adopt Brownian motion to model the diffusion of the  $\Tx$ and  the molecules  $\mathrm{X}$ \cite{CMM:17:ST}. 
	When the drug molecules hit the tumor, they are absorbed by receptors on the surface of the diseased cells  \cite{LEELYP:15:CES,KS:14:DD}. For convenience, we model the tumor as a spherical absorbing receiver, denoted by $\Rx$. In reality, the colony of cancer cells may potentially  have a different geometry, of course. However, as an abstract approximation, we model the cancer cells as one effective spherical receiver with radius $\arx$ and with a surface area equivalent to the total surface area of the tumor, i.e., the absorption on both the actual and the modeled surfaces is expected to be comparable \cite{SSR:14:BE}. 
	
	The absorption rate ultimately determines the  therapeutic impact of   the drug \cite{LEELYP:15:CES,KS:14:DD}. Thus, we make achieving a desired absorption rate  the objective for the system design. We will formally define the absorption rate in the next subsection but before doing so, we  define the   parameters and   assumptions used in the system model.
	We assume that the diffusion of the $\Tx$ and molecules $\mathrm{X}$ is independent of each other with corresponding diffusion coefficients, $D_\Tx$ and $D_\mathrm{X}$, respectively. We denote the time-varying distance between the centers of the $\Tx$ and the $\Rx$  at time $t$ by $r(t)$. In a 3D Cartesian space, $r(t)$ can be represented as $r(t)=\left(r_x,r_y,r_z\right)$. Then, the distance between the centers of the transceivers at time zero is denoted by $r(t=0)=r_0=\left(r_{x0},r_{y0},r_{z0}\right)$. We assume  that the $\Tx$ can release  molecules during a period of time denoted by $\TTx$. After this period, the drug carrier may be removed by  blood circulation or run out of drugs. 
	We assume that the $\Tx$ releases molecules at its center  instantaneously and discretely over time. Let $t_i$ and $\Dt_i$ denote the time instant  of the $i$-th release and the duration of the interval between the $i$-th  and  the $(i+1)$-th release, respectively. We have $i\in \{1,\dots,I\}$, where $I$ is the total number of releases during $\TTx$. We note that a continuous release can  be approximated by letting $\Delta t_i \rightarrow 0$, i.e., $I \rightarrow \infty$. Furthermore, let $\alpha_i$ and $A=\sum_{i=1}^I \alpha_i$  denote the number of drug molecules released at time $t_i$ and the total amount of drugs released during $\TTx$, respectively.

	\subsection{Impulse Response of the Diffusive Channel} \label{sub3:2}
	
	To evaluate the drug absorption rate at the $\Rx$ given the drug release profile at the $\Tx$, we first need to derive the CIR. Let $h(t,\tau)$ denote the hitting rate, i.e., the absorption rate of a given molecule, $\tau$ [\si{\second}] after its release at time $t$ at the center of $\Tx$. Note that the distance between the centers of the $\Tx$ and the $\Rx$, i.e., $r(t)$, is a random variable and a function of $t$. Hence, $h(t,\tau)$ may be referred to as the CIR, which completely characterizes the time-variant channel.   In  $h(t,\tau)$, variable $t$ denotes the time instant of the release of the molecules at $\Tx$ while variable $\tau$ represents the time period between the release at the $\Tx$ and the absorption at the $\Rx$. 
	
	For a given $r(t)$, the CIR $h(t,\tau)$ is given by \cite{YHT:14:CL}
	\begin{align} \label{eq:6}
	h(t,\tau)=\frac{\arx}{\sqrt{4\pi D_\mathrm{X} \tau^3}}\left(1-\frac{\arx}{r(t)}\right)\exp\left(-\frac{\left(r(t)-\arx\right)^2}{4 D_\mathrm{X} \tau}\right),
	\end{align}
	for $\tau>0$, and $h(t,\tau)=0$, for $\tau\leq 0$. 
	From the definition of $h(t,\tau)$, for $\Delta \tau \rightarrow 0$, we can interpret $h(t,\tau)\Delta \tau$ as the probability of absorption of a molecule  by the $\Rx$ between times $\tau$ and $\tau+\Delta \tau$ after the release at time $t$. If  $\alpha_i$ molecules are released at the $\Tx$ at time $t_i$, the expected number of molecules absorbed at the $\Rx$ between times $t$ and $t+\Delta t$, for $\Delta t \rightarrow 0$,  is equal to $\alpha_i h( t_i,t-t_i) \Delta t$, for $\tau=t-t_i$.  During the period $[0,t]$,   a total amount of  $A_t=\sum_i \alpha_i, \forall i | t_i < t,$ of drugs are released and thus, an expected total amount of  $y(t)=\sum_i\alpha_i h( t_i,t-t_i) \Delta t, \forall i | t_i < t$, of drugs are absorbed between times $t$ and $t+\Delta t$, for $\Delta t \rightarrow 0$. Let $g(t)$ denote the absorption rate of  molecules $\mathrm{X}$ at the $\Rx$  at time $t$, i.e., $g(t) =y(t)/\Delta t$, $\Delta t \rightarrow 0$.  Then, we have
	\begin{align}\label{eq:5a}
	g\left(t\right)=\underset{\forall i | t_i < t}{\sum}\alpha_i h\left(t_i,t-t_i\right).
	\end{align}
	As mentioned before, the absorption rate $g(t)$   at the tumor directly affects the healing efficacy of the drug. Hence, we will design the drug delivery system such that $g(t)$ does not fall below a prescribed value.

	\section{Controlled-Release Design} \label{sec:3}
	
	We first formulate the controlled-release design problem and then derive the mean and variance of the stochastic time-variant channel to solve the problem.

	\subsection{Problem Formulation}
	
	The treatment of many diseases requires the diseased cells to absorb a minimum rate of drugs during a given period of time \cite{KS:14:DD}. To design an efficient drug delivery system satisfying this requirement, we optimize the amounts of released drugs $\alpha_i$ such that   the total amount of released drugs, $A=\sum_{i=1}^I \alpha_i$, is minimized and the absorption rate $g(t)$ is  equal to or larger than a targeted rate, $\theta(t)$, for a period of time, denoted  by $\TRx$. Depending on the properties of the tumor, $\theta(t)$ may vary with time.
	Since $g(t)$ is a random variable, 
	we will design the system based on the first and second order moments of the CIR. In particular, we minimize $A=\sum_{i=1}^I \alpha_i$ subject to the constraint that the mean of $g(t)$ minus a certain deviation  is equal to or above a threshold during $\TRx$, i.e., $\mathrm{E}\left\{g(t)\right\}-\beta\Gamma\left\{g(t)\right\}\geq \theta(t)$ for $0\leq t \leq \TRx$, where $\mathrm{E}\left\{\cdot\right\}$ and $\Gamma\{\cdot\}$ denote  expectation and   standard deviation, respectively, and $\beta$ is a coefficient determining how much deviation from the mean is taken into account. Based on \eqref{eq:5a}, the constraint can be written as a function of $\alpha_i$ as follows
	\begin{align} \label{eq:35}
	&\mathrm{E}\left\{g(t)\right\}-\beta\Gamma\left\{g(t)\right\}\overset{(a)}{\geq} \\
	&\sum_{\forall i | t_i<t}\alpha_i\left(\mathrm{E}\left\{h\left(t_i,t-t_i\right)\right\}-\beta\Gamma\left\{h\left(t_i,t-t_i\right)\right\} \right)\geq \theta(t),\nonumber
	\end{align}
	where $0\leq t\leq \TRx$. Inequality $(a)$ in \eqref{eq:35} is due to Minkowski's inequality \cite{Ste:08:Lec}.
	Note that we may not be able to  find $\alpha_i$ such that \eqref{eq:35}  holds for all values of $\beta$ and $\theta(t)$. However, when $\mathrm{E}\left\{h\left(t_i,t-t_i\right)\right\}>\beta\Gamma\left\{h\left(t_i,t-t_i\right)\right\}$, i.e., either $\beta$  or $\Gamma\left\{h\left(t_i,t-t_i\right)\right\}$ is small, so that $\beta\Gamma\left\{h\left(t_i,t-t_i\right)\right\}$ is sufficiently small, we can always find $\alpha_i$  so that \eqref{eq:35}  holds for arbitrary  $\theta(t)$. Since time $t$ is a continuous variable, the constraint in \eqref{eq:35} has to be satisfied for all values of $t$, $0\leq t \leq \TRx$, and thus there is an infinite number  of constraints, each of which corresponds to one value of $t$. Therefore, we simplify the problem by relaxing the constraints to hold only for  a finite number of time instants $t=t_n=n\Delta t_n$, where $n=1,\dots,N$ and $\Delta t_n=\TRx/N$.
	Then,  the optimization problem for the design of $\alpha_i$ can be formulated as 
	\begin{subequations}\label{eq:36}
		\begin{align}
		\underset{\alpha_i\geq 0, \forall i}{\min} &A=\sum_{i=1}^I \alpha_i \label{eq:36a}\\
		\text{s.t. }  &\sum_{i,t_i<t}\alpha_i\left( m\left(t_i,n\Delta t_n-t_i\right)- \beta  \sigma\left(t_i,n\Delta t_n-t_i\right)\right)\nonumber\\
		&\hspace{1mm}\geq \theta(n\Delta t_n-t_i), \text{ for } n=1,\dots, N, \label{eq:36c}
		\end{align}
	\end{subequations}
	where $m\left(t,\tau\right)$  and   $\sigma\left(t,\tau\right)$ are the mean and the standard deviation of $h\left(t,\tau\right)$, respectively.
	In order to solve \eqref{eq:36}, we need to derive analytical expressions for $m\left(t,\tau\right)$  and   $\sigma\left(t,\tau\right)$. Moreover, since  $h\left(t,\tau\right)$ is a function of $r(t)$, which is a random variable, we first need to derive the distribution of $r(t)$ before  deriving  $m\left(t,\tau\right)$  and   $\sigma\left(t,\tau\right)$. Having $m\left(t,\tau\right)$  and   $\sigma\left(t,\tau\right)$ and treating the $\alpha_i$ as real numbers, \eqref{eq:36} can be readily solved as a linear program using existing algorithms or numerical software such as \textsc{Matlab}. We note that although the numbers of molecules $\alpha_i$ are integers, for tractability,  we  solve \eqref{eq:36} for real  $\alpha_i$ and quantize  the results to the nearest integer values.

	Note that the problem in \eqref{eq:36} is statistical in nature and provides instructive guidance for the offline design of the system.

	\subsection{Distribution of the $\Tx$-$\Rx$ Distance in a Diffusive System}
	
	In this subsection, we derive the distribution of $r(t)$. If the diffusion of  $\Tx$  follows Brownian motion in the entire 3D environment, we have 
	$r_x \sim \mathcal{N}\left(r_{x0}, 2 D_\Tx t\right), 
	r_y \sim \mathcal{N}\left(r_{y0}, 2 D_\Tx t\right), 
	r_z \sim \mathcal{N}\left(r_{z0}, 2 D_\Tx t\right)$.
	Then, $\frac{r(t)}{\sqrt{2 D_\Tx t}}$, denoted by $\gamma$, follows a noncentral chi distribution, denoted by $\mathcal{X}_k(\lambda)$, \cite{Rob:69:Bel}
	\begin{align}\label{eq:4}
	\gamma=\frac{r(t)}{\sqrt{2 D_\Tx t}}=\sqrt{\frac{r_{x}^2+r_{y}^2+r_{z}^2}{2 D_\Tx t}} \sim \mathcal{X}_k(\lambda),
	\end{align}
	with parameters $k=3$ and $\lambda=\sqrt{\frac{r_{x0}^2+r_{y0}^2+r_{z0}^2}{2 D_\Tx t}} =\frac{r_0}{\sqrt{2 D_\Tx t}}$. Thus, we can obtain the PDF of $r$ as follows
	\begin{align} \label{eq:5}
	f_{r(t)}(r)\overset{(a)}{=}&\frac{1}{\sqrt{2 D_\Tx t}}f_{\gamma}\left(\gamma\right)\\
	\overset{(b)}{=}&\frac{r}{r_0  \sqrt{\pi D_\Tx t}}\exp\left(-\frac{r^2+r_0^2}{4 D_\Tx t}\right)\sinh\left(\frac{r_0 r}{2 D_\Tx t}\right),\nonumber
	\end{align}
	where $f_{\gamma}\left(\gamma\right)$ is the PDF of  $\gamma$. Equality (a) in \eqref{eq:5} exploits the fact that $\gamma$ is  a function of $r(t)$ \cite[Eq.~5-16]{PP:02:Book}. Equality (b) in \eqref{eq:5} is obtained from the expression of the PDF $f_{\gamma}\left(\gamma\right)$ \cite{Rob:69:Bel}.
	
	\begin{remark}\label{re:1}
		Note that \eqref{eq:5} was derived under the assumption that the $\Tx$ can diffuse in the entire 3D environment. In reality, the $\Tx$ cannot be inside the $\Rx$, i.e., it does not interact with the $\Rx$, and thus will be reflected when it hits the $\Rx$ boundary. Hence, the actual $f_{r(t)}(r)$ may differ from  \eqref{eq:5}, e.g. $f_{r(t)}(r)=0$ for $r<\atx+\arx$. However,  we note that  for very small $r$, i.e., $r\approx 0$, \eqref{eq:5} does approach zero. Hence,   \eqref{eq:5} is a valid approximation for the actual $f_{r(t)}(r)$. The validity of this approximation is evaluated in Section~\ref{sec:5} via simulations.
	\end{remark}

	\subsection{Statistical Moments of Diffusive Channel} \label{sub3:3}
	
	In this subsection, we derive the statistical moments of the diffusive channel, i.e., $m(t,\tau)$ and $\sigma^2(t,\tau)$. In particular, $m(t,\tau)$ is obtained as
	\begin{align} \label{eq:7}
	m(t,\tau)=\int_0^\infty h(t,\tau)\left|_{r(t)=r}\right.f_{r(t)}(r)\diff r.
	\end{align}
	A closed-form expression of \eqref{eq:7} is provided in the following theorem.
	
	\begin{theorem}
		The mean of the impulse response of a time-variant MC channel  with diffusive molecules transmitted by a diffusive transparent transmitter and absorbed by an absorbing receiver is given by
		\begin{align} \label{eq:8}
		m(t,\tau)&= \frac{\arx}{4\sqrt{\pi\left(D_\mathrm{X} \tau+D_\Tx t\right)} r_0 \tau}\exp\left(-\frac{\arx^2}{4D_\mathrm{X} \tau}-\frac{r_0^2}{4D_\Tx t}\right)\nonumber \\
		&\times\left(-e^{\frac{b(t,\tau)^2}{4a(t,\tau)}}\left(\frac{b(t,\tau)}{2a(t,\tau)}+\arx\right)\erfc\left(\frac{b(t,\tau)}{2\sqrt{a(t,\tau)}}\right)\right.\nonumber\\
		&\left.+e^{\frac{c(t,\tau)^2}{4a(t,\tau)}}\left(\frac{c(t,\tau)}{2a(t,\tau)}+\arx\right)\erfc\left(\frac{c(t,\tau)}{2\sqrt{a(t,\tau)}}\right)\right),
		\end{align}
		where $\erfc(\cdot)$ is the complementary  error function and for compactness, $a(t,\tau)$, $b(t,\tau)$, and $c(t,\tau)$ are defined as follows
		\begin{align} \label{eq:9}
		a(t,\tau)&=\frac{1}{4D_\mathrm{X} \tau}+\frac{1}{4D_\Tx t},\hspace{5mm} b(t,\tau)=-\frac{\arx}{2D_\mathrm{X} \tau}-\frac{r_0}{2D_\Tx t} ,\nonumber\\ c(t,\tau)&=-\frac{\arx}{2D_\mathrm{X} \tau}+\frac{r_0}{2D_\Tx t}.
		\end{align}
	\end{theorem}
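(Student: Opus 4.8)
The plan is to evaluate the integral in \eqref{eq:7} directly: substitute the conditional CIR \eqref{eq:6} for $h(t,\tau)|_{r(t)=r}$ and the density \eqref{eq:5} for $f_{r(t)}(r)$, and reduce the result to standard Gaussian-type integrals in closed form.

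First I would form the product $h(t,\tau)|_{r(t)=r}\,f_{r(t)}(r)$. The factor $1-\arx/r$ in \eqref{eq:6} multiplies the factor $r$ in \eqref{eq:5}, producing a polynomial factor $r-\arx$; this also removes the apparent singularity of $h$ at the origin, so the integrand is well behaved on $(0,\infty)$ while the Gaussian decay ensures convergence at infinity. Pulling all $r$-independent constants out of the integral, the remaining integrand is $(r-\arx)\exp\!\left(-\frac{(r-\arx)^2}{4D_\mathrm{X}\tau}-\frac{r^2+r_0^2}{4D_\Tx t}\right)\sinh\!\left(\frac{r_0 r}{2D_\Tx t}\right)$.

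Next I would expand $(r-\arx)^2$ and write $\sinh(x)=\frac12(e^x-e^{-x})$, which splits the integral into two terms. In each term the exponent is a quadratic in $r$, namely $-a(t,\tau)r^2-b(t,\tau)r-C$ in one and $-a(t,\tau)r^2-c(t,\tau)r-C$ in the other, where $a,b,c$ are precisely the quantities in \eqref{eq:9} and $C=\frac{\arx^2}{4D_\mathrm{X}\tau}+\frac{r_0^2}{4D_\Tx t}$ is the constant appearing in the exponential prefactor of \eqref{eq:8}. Each term is then evaluated using the elementary identities $\int_0^\infty e^{-a r^2-b r}\,\diff r=\frac12\sqrt{\pi/a}\,e^{b^2/(4a)}\erfc\!\left(\frac{b}{2\sqrt a}\right)$ and $\int_0^\infty r\,e^{-a r^2-b r}\,\diff r=\frac{1}{2a}-\frac{b}{2a}\cdot\frac12\sqrt{\pi/a}\,e^{b^2/(4a)}\erfc\!\left(\frac{b}{2\sqrt a}\right)$, valid for any $a>0$ and any real $b$ and both obtained by completing the square together with $\int_s^\infty e^{-a u^2}\,\diff u=\frac12\sqrt{\pi/a}\,\erfc(s\sqrt a)$. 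The $\frac{1}{2a}$ contributions from the two terms cancel, leaving only $\erfc$ terms with arguments $\frac{b}{2\sqrt a}$ and $\frac{c}{2\sqrt a}$ and the coefficients $\frac{b}{2a}+\arx$ and $\frac{c}{2a}+\arx$ that appear in \eqref{eq:8}.

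Finally I would collect the overall constant, $\frac{\arx}{\sqrt{4\pi D_\mathrm{X}\tau^3}}\cdot\frac{1}{r_0\sqrt{\pi D_\Tx t}}\cdot\frac12\cdot\frac{\sqrt\pi}{2\sqrt a}$, and simplify it using $a(t,\tau)=\frac{D_\Tx t+D_\mathrm{X}\tau}{4D_\mathrm{X}\tau D_\Tx t}$, whence $\sqrt{D_\mathrm{X}\tau^3 D_\Tx t}\,\sqrt a=\frac{\tau}{2}\sqrt{D_\Tx t+D_\mathrm{X}\tau}$; this reduces the prefactor to $\frac{\arx}{4\sqrt{\pi(D_\mathrm{X}\tau+D_\Tx t)}\,r_0\tau}$ and, together with the factor $e^{-C}$, yields \eqref{eq:8}. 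I expect the main (though essentially routine) obstacle to be the algebraic bookkeeping in this last simplification, together with the sign handling when $c(t,\tau)<0$: since the $\erfc$ closed form above holds for negative arguments as well, no case distinction is needed. I would also note that, consistent with Remark~\ref{re:1}, the integral is taken over all $r>0$ in keeping with the approximation used for $f_{r(t)}(r)$.
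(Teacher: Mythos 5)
Your proposal is correct and follows essentially the same route as the paper's proof: substituting \eqref{eq:6} and \eqref{eq:5} into \eqref{eq:7} and evaluating the resulting Gaussian-type integrals, with the constants collapsing to the prefactor in \eqref{eq:8}. The only difference is that the paper simply cites the tabulated integrals from Prudnikov et al.\ (Eqs.~(2.3.15.4) and (2.3.15.7)) whereas you derive them by completing the square, which checks out.
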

	
	\begin{proof}
		Substituting \eqref{eq:6} and \eqref{eq:5}  into \eqref{eq:7} and using the integrals given by \cite[Eq.~(2.3.15.4) and Eq.~(2.3.15.7)]{PBM:86:Book}, we obtain the expression for $m(t,\tau)$  in \eqref{eq:8}.	
	\end{proof} 
	
	\begin{remark}\label{re:2}
		We note that $m(t,\tau)$ approaches zero when $t~\rightarrow~\infty$ since $r(t)$ increases on average due to diffusion.
	\end{remark}
	
	Next, we obtain the variance of $h(t,\tau)$ as
	\begin{align} \label{eq:22}
	\sigma^2(t)=\phi(t)-m^2(t),
	\end{align}
	where $\phi(t)=\mathrm{E}\left\{h^2(t,\tau)\right\}$.
	The following lemma gives an analytical expression for  $\phi(t)$.
	\begin{lemma}
		For the considered channel, $\phi(t)$ is given by
		\begin{align} \label{eq:18}
		\phi(t)&=\hat{k}(t)\int_0^\infty \Big[ \exp\left(-\hat{a}(t)r^2-\hat{b}(t)r\right)\\
		&-\exp\left(-\hat{a}(t)r^2-\hat{c}(t)r\right)\Big]\left(r-2\arx+\frac{\arx^2}{r}\right)\diff r, \nonumber
		\end{align}
		where 
		\begin{align} \label{eq:19}
		&\hat{k}(t)=\frac{\arx^2e^{-\frac{\arx^2}{2D_\mathrm{X}\tau}-\frac{r_0^2}{4D_\Tx t}}}{8D_\mathrm{X} \pi \tau^3 r_0\sqrt{\pi D_\Tx t}},\hspace{5mm} \hat{a}(t)=\frac{1}{2D_\mathrm{X} \tau}+\frac{1}{4D_\Tx t},\nonumber\\
		&\hat{b}(t)=-\frac{\arx}{D_\mathrm{X}\tau}-\frac{r_0}{2D_\Tx t},\hspace{5mm} \hat{c}(t)=-\frac{\arx}{D_\mathrm{X}\tau}+\frac{r_0}{2D_\Tx t}.
		\end{align}	
	\end{lemma}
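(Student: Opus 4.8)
The plan is to obtain \eqref{eq:18} by direct substitution, mirroring the proof of Theorem~1 but for the second moment. Starting from the definition $\phi(t)=\mathrm{E}\{h^2(t,\tau)\}=\int_0^\infty \big[h(t,\tau)\big|_{r(t)=r}\big]^2 f_{r(t)}(r)\,\diff r$, I would insert \eqref{eq:6} for $h(t,\tau)$ and \eqref{eq:5} for $f_{r(t)}(r)$. Squaring \eqref{eq:6} pulls the constant $\arx^2/(4\pi D_\mathrm{X}\tau^3)$ in front, squares the rational factor to $(1-\arx/r)^2$, and---crucially---doubles the exponent, turning $\exp(-(r-\arx)^2/(4D_\mathrm{X}\tau))$ into $\exp(-(r-\arx)^2/(2D_\mathrm{X}\tau))$.

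Next I would expand $(r-\arx)^2 = r^2-2\arx r+\arx^2$ inside every exponential appearing in $h^2$ and in $f_{r(t)}$, and write $\sinh(r_0 r/(2D_\Tx t)) = \tfrac12\big(e^{r_0 r/(2D_\Tx t)}-e^{-r_0 r/(2D_\Tx t)}\big)$. Collecting terms then produces the two Gaussian-type kernels $\exp(-\hat a(t)r^2-\hat b(t)r)$ and $\exp(-\hat a(t)r^2-\hat c(t)r)$ of \eqref{eq:18}: the coefficient $\hat a(t)=\tfrac1{2D_\mathrm{X}\tau}+\tfrac1{4D_\Tx t}$ comes from combining the doubled $-r^2/(2D_\mathrm{X}\tau)$ term of $h^2$ with the $-r^2/(4D_\Tx t)$ term of $f_{r(t)}$; the linear coefficients $\hat b(t)$ and $\hat c(t)$ arise from the $+\arx r/(D_\mathrm{X}\tau)$ contribution (again from the doubled exponent) together with the $\pm r_0 r/(2D_\Tx t)$ contributed by the two exponentials of $\sinh$, which also accounts for the minus sign between the two kernels; and all the $r$-independent exponential factors $e^{-\arx^2/(2D_\mathrm{X}\tau)}$, $e^{-r_0^2/(4D_\Tx t)}$, together with the $\tfrac12$ from the $\sinh$ split and the prefactors $\arx^2/(4\pi D_\mathrm{X}\tau^3)$ and $1/(r_0\sqrt{\pi D_\Tx t})$, assemble into $\hat k(t)$ of \eqref{eq:19}. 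Finally, multiplying the surviving $(1-\arx/r)^2$ by the factor $r$ from $f_{r(t)}(r)$ gives $r-2\arx+\arx^2/r$, which is precisely the polynomial multiplying the bracket in \eqref{eq:18}.

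There is no genuine analytic obstacle here---the argument is a bookkeeping exercise---so the ``hard part'' is merely tracking signs and the factor of two: the exponent of $h$ doubles upon squaring while that of $f_{r(t)}$ does not, and the two exponentials of $\sinh$ enter $\hat b(t)$ and $\hat c(t)$ with opposite signs. It is worth remarking, as a sanity check, that the integrand of \eqref{eq:18} is integrable despite the $\arx^2/r$ term: the bracket $\exp(-\hat a(t)r^2-\hat b(t)r)-\exp(-\hat a(t)r^2-\hat c(t)r)$ vanishes linearly at $r=0$, with leading behaviour $(\hat c(t)-\hat b(t))r=\tfrac{r_0}{D_\Tx t}\,r$, so the apparent singularity at the origin cancels, while $\hat a(t)>0$ guarantees convergence at infinity. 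If one wished, the remaining one-dimensional integral could be evaluated in closed form via the same integral identities used for Theorem~1, but \eqref{eq:18} as stated already follows directly from the substitution and simplification described above.
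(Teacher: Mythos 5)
Your proposal is correct and follows exactly the paper's approach: the paper's proof is simply ``substituting \eqref{eq:6} and \eqref{eq:5} into the definition of $\phi(t)$ and simplifying,'' and your expansion of $(r-\arx)^2$, the splitting of $\sinh$ into two exponentials, and the resulting identifications of $\hat{a}(t)$, $\hat{b}(t)$, $\hat{c}(t)$, $\hat{k}(t)$, and the polynomial $r-2\arx+\arx^2/r$ all check out. The integrability remark at $r=0$ is a nice addition not present in the paper.
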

	\begin{proof}
		Substituting \eqref{eq:6} and \eqref{eq:5} into the definition of $\phi(t)$ and simplifying the expression, we obtain \eqref{eq:18}.
	\end{proof}
	
	\begin{remark}
		The expression in \eqref{eq:18} comprises  integrals of the form $\int_0^\infty \exp\left(a x^2+b x\right)/x \;\diff x$, where $a$ and $b$ are constants, and cannot be obtained in closed form. However, these integrals can be evaluated numerically in a straight forward manner.
	\end{remark}

	\section{Performance Analysis}
	
	Since $g(t)$ is random, we cannot always guarantee $g(t)\geq \theta(t)$.
	Moreover, since $g(t)\geq \theta(t)$ is required for proper operation of the system, we  evaluate the system performance in terms of the probability that  $g(t)\geq \theta(t)$, denoted by $\Pt=\mathrm{Pr}\left\{g(t)\geq \theta(t)\right\}$. In this section, we first present a theoretical framework for evaluation of the system performance in terms of $\Pt$ expressed as a function of the PDF and CDF of the CIR, before finally deriving the PDF and CDF of the CIR.

	\subsection{System Performance}
	
	The probability $\Pt$ is given in the following theorem.
	\begin{theorem}
		The system performance  metric $\Pt$ can be expressed as
		\begin{align}\label{eq:38}
		\Pt=&1-f_{\alpha_1 h\left(t-t_1,t_1\right)}\left(\theta(t)\right)\ast \dots \ast f_{\alpha_{\check{i}-1} h\left(t-t_{\check{i}-1},t_{\check{i}-1}\right)}\left(\theta(t)\right)\nonumber\\
		&\ast  F_{\alpha_{\check{i}} h\left(t-t_{\check{i}},t_{\check{i}}\right)}\left(\theta(t)\right),
		\end{align}
		where $\ast$ denotes  convolution, $\check{i}=1,2,\dots$ satisfies $t_{\check{i}}\leq t$, and $f_{\{\cdot\}}$ and $F_{\{\cdot\}}$ denote the PDF and CDF of the random variable in the subscript, respectively. In \eqref{eq:38}, we define $f_{\alpha_{\check{i}} h\left(t-t_{\check{i}},t_{\check{i}}\right)}\left(\theta(t)\right)=1/\alpha_{\check{i}}\times f_{ h\left(t-t_{\check{i}},t_{\check{i}}\right)}\left(\theta(t)/\alpha_{\check{i}}\right)$ and $F_{\alpha_{\check{i}} h\left(t-t_{\check{i}},t_{\check{i}}\right)}\left(\theta(t)\right)=F_{h\left(t-t_{\check{i}},t_{\check{i}}\right)}\left(\theta(t)/\alpha_{\check{i}}\right)$.
	\end{theorem}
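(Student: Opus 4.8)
The plan is to recognize that the performance metric $\Pt=\mathrm{Pr}\{g(t)\geq\theta(t)\}$ is nothing but the complementary cumulative distribution function of the random variable $g(t)$ evaluated at $\theta(t)$, i.e., $\Pt=1-F_{g(t)}\bigl(\theta(t)\bigr)$, and then to compute $F_{g(t)}$ directly from \eqref{eq:5a}, which expresses $g(t)$ as a finite sum of scaled channel taps. Since all the densities involved are continuous, the distinction between $\mathrm{Pr}\{g(t)<\theta(t)\}$ and $\mathrm{Pr}\{g(t)\leq\theta(t)\}$ is immaterial, so no care is needed at the boundary.

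Concretely, I would let $\check{i}$ denote the largest release index with $t_{\check{i}}\leq t$, so that by \eqref{eq:5a} we have $g(t)=\sum_{i=1}^{\check{i}}\alpha_i h(t_i,t-t_i)$. Treating the taps $h(t_1,t-t_1),\dots,h(t_{\check{i}},t-t_{\check{i}})$ as statistically independent, the density of the partial sum $\sum_{i=1}^{\check{i}-1}\alpha_i h(t_i,t-t_i)$ equals the convolution $f_{\alpha_1 h(t_1,t-t_1)}\ast\cdots\ast f_{\alpha_{\check{i}-1} h(t_{\check{i}-1}, t - t_{\check{i}-1})}$. Applying the elementary identity $F_{X+Y}(z)=\int f_X(x)F_Y(z-x)\diff x=(f_X\ast F_Y)(z)$ with $X$ the partial sum and $Y=\alpha_{\check{i}} h(t_{\check{i}},t-t_{\check{i}})$, one obtains $F_{g(t)}\bigl(\theta(t)\bigr)$ as the convolution of the first $\check{i}-1$ tap densities with the CDF of the last tap, all evaluated at $\theta(t)$; substituting into $\Pt=1-F_{g(t)}\bigl(\theta(t)\bigr)$ yields \eqref{eq:38}. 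For $\check{i}=1$ the convolution product is empty and the expression collapses to $\Pt=1-F_{\alpha_1 h(t_1,t-t_1)}\bigl(\theta(t)\bigr)$, as it should.

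The scaling relations $f_{\alpha_i h}(x)=\tfrac{1}{\alpha_i}f_{h}(x/\alpha_i)$ and $F_{\alpha_i h}(x)=F_{h}(x/\alpha_i)$ for $\alpha_i>0$ follow from the standard change of variables for a strictly increasing transformation, and the individual tap densities $f_{h(t_i,t-t_i)}$ and CDFs $F_{h(t_i,t-t_i)}$ are exactly the quantities derived in the remainder of this section, so they may be invoked here. The only genuinely delicate point is the independence of the taps across distinct release instants: all molecules leave the center of a single diffusing transmitter, so the distances $r(t_1),r(t_2),\dots$ are in fact correlated, and \eqref{eq:38} is exact only under this independence assumption (equivalently, it is exact if each release is carried by an independent drug carrier). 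I expect that step — either justifying the independence approximation or bounding the error it incurs — to be the main obstacle; once independence is granted, the derivation reduces to the routine convolution formula for the distribution of a sum of independent random variables.
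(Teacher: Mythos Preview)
Your proposal is correct and follows essentially the same route as the paper: write $\Pt=1-F_{g(t)}(\theta(t))$, use independence of the scaled taps to express the distribution of their sum as a convolution, and then replace one integrated factor by its CDF via $F_{X+Y}=f_X\ast F_Y$. Your remark that the independence of the taps is an assumption rather than a fact (since the distances $r(t_1),r(t_2),\dots$ arise from a single diffusing carrier and are therefore correlated) is well taken; the paper simply asserts ``independent releases at $t_i$'' without further justification, so you have correctly identified the one step that is not fully rigorous.
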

	\begin{proof}
		From the definition of the CDF, we have
		\begin{align}\label{eq:38a}
		&\Pt=1-F_{g(t)}\left\{\theta(t)\right\}=1-\int_0^{\theta(t)}f_{g(t)}(g)\diff g.
		\end{align}
		Due to the summation of independent random variables in \eqref{eq:5a}, i.e., independent releases at $t_i$, we have
		\begin{align}\label{eq:38b}
		f_{g(t)}(g)=\left(f_{\alpha_1 h\left(t-t_1,t_1\right)}\ast \dots \ast f_{\alpha_{\check{i}} h\left(t-t_{\check{i}},t_{\check{i}}\right)}\right)(g).
		\end{align}
		Substituting \eqref{eq:38b} into \eqref{eq:38a}, and using the integration property of the convolution, i.e.,
		\begin{align}\label{eq:39}
		&\int_0^{\theta(t)}\left(f_{\alpha_1 h\left(t-t_1,t_1\right)}\ast \dots \ast f_{\alpha_{\check{i}} h\left(t-t_{\check{i}},t_{\check{i}}\right)}\right)(g)\diff g\\ \nonumber
		&=f_{\alpha_1 h\left(t-t_1,t_1\right)}\left(\theta(t)\right)\ast \dots \ast \int_0^{\theta(t)}f_{\alpha_{\check{i}} h\left(t-t_{\check{i}},t_{\check{i}}\right)}(g)\diff g,
		\end{align}
		and using the definition of the CDF, we obtain \eqref{eq:38}. 
	\end{proof}
	According to \eqref{eq:38}, $\Pt$ can be evaluated  based on exact expressions for the PDF  and the CDF of $h(t,\tau)$, which will be derived in the next subsection. 
	
	Furthermore, we note that a minimum value of $\Pt$ can be guaranteed based on statistical moments of the CIR, without  knowledge of the  PDF and the CDF, as shown in the following proposition.
	\begin{proposition} \label{prop:1}
		A lower bound on $\Pt=\mathrm{Pr}\left\{g(t)\geq \theta(t)\right\}$ is given as follows
		\begin{align}\label{eq:37}
		&\Pt\geq 1-\frac{1}{\beta^2}.
		\end{align}
	\end{proposition}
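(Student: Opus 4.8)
The plan is to combine the design constraint \eqref{eq:35} with Chebyshev's inequality; the statement is essentially a one‑line consequence of the way the release profile is constructed. First I would invoke the fact that the controlled‑release amounts $\alpha_i$ are chosen so that \eqref{eq:35} holds (in its relaxed form \eqref{eq:36}), which in particular guarantees $\theta(t)\leq \mathrm{E}\left\{g(t)\right\}-\beta\Gamma\left\{g(t)\right\}$. Consequently the event $\left\{g(t)\geq\theta(t)\right\}$ contains the event $\left\{g(t)\geq \mathrm{E}\left\{g(t)\right\}-\beta\Gamma\left\{g(t)\right\}\right\}$, so that
\begin{align}
\Pt=\mathrm{Pr}\left\{g(t)\geq\theta(t)\right\}\geq \mathrm{Pr}\left\{g(t)\geq \mathrm{E}\left\{g(t)\right\}-\beta\Gamma\left\{g(t)\right\}\right\}. \nonumber
\end{align}

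Next I would bound the complementary event. Since $\left\{g(t)< \mathrm{E}\left\{g(t)\right\}-\beta\Gamma\left\{g(t)\right\}\right\}\subseteq\left\{\left|g(t)-\mathrm{E}\left\{g(t)\right\}\right|\geq \beta\Gamma\left\{g(t)\right\}\right\}$, Chebyshev's inequality \cite{PP:02:Book} applied to $g(t)$ with standard deviation $\Gamma\left\{g(t)\right\}$ gives $\mathrm{Pr}\left\{\left|g(t)-\mathrm{E}\left\{g(t)\right\}\right|\geq \beta\Gamma\left\{g(t)\right\}\right\}\leq 1/\beta^2$. Combining this with the previous display yields $\Pt\geq 1-1/\beta^2$, i.e., \eqref{eq:37}.

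The argument is short, so there is no real technical obstacle; the only points requiring a word of care are the degenerate case and the discretization. If $\Gamma\left\{g(t)\right\}=0$ then $g(t)$ equals its mean deterministically, which is at least $\theta(t)$, so $\Pt=1$ and the bound holds trivially; otherwise Chebyshev's inequality applies directly. I would also remark that, because \eqref{eq:35} is enforced only at the finite sampling instants $t_n=n\Delta t_n$ of the relaxed program \eqref{eq:36}, the guarantee \eqref{eq:37} is to be read at those instants and recovers the continuous‑time statement as $N\to\infty$. Finally, I would note that a slightly sharper constant is available from the one‑sided (Cantelli) inequality, namely $\Pt\geq 1-1/(1+\beta^2)$, but the weaker two‑sided form stated in the proposition suffices and keeps the bound in the familiar $1-1/\beta^2$ shape.
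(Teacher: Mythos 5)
Your proof is correct and follows essentially the same route as the paper: both arguments combine the design constraint \eqref{eq:35} with Chebyshev's inequality, differing only in whether the constraint is invoked before applying Chebyshev (your version, reducing to the deviation $\beta\Gamma\left\{g(t)\right\}$) or after (the paper's version, which applies Chebyshev with deviation $\mathrm{E}\left\{g(t)\right\}-\theta(t)$ and then bounds the resulting ratio using \eqref{eq:35}). Your added remarks on the degenerate case $\Gamma\left\{g(t)\right\}=0$ and the one-sided Cantelli refinement are correct but not needed for the stated bound.
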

	\begin{proof}
		By using \eqref{eq:35} and the Chebyshev inequality \cite{PP:02:Book}, we obtain
		\begin{align}\label{eq:37a}
		&\Pt\overset{(a)}{\geq} \mathrm{Pr}\Big\{\left|g(t)-\mathrm{E}\left\{g(t)\right\}\right|\leq \mathrm{E}\left\{g(t)\right\}-\theta(t)\Big\}\\\nonumber
		&\overset{(b)}{\geq} 1-\frac{\Gamma^2\left\{g(t)\right\}}{ \left(\mathrm{E}\left\{g(t)\right\}-\theta(t)\right)^2}\overset{(c)}{\geq}  1-\frac{1}{\beta^2},
		\end{align}
		where (a) can be obtained easily by expanding the absolute value on the right-hand side, (b) is due to the  Chebyshev inequality, and (c) is due to \eqref{eq:35}. This completes the proof.
	\end{proof}
	\begin{remark} \label{re:3}
		Proposition~\ref{prop:1} is not only useful  for evaluating the system performance, but also provides a guideline for the design of the controlled release of drugs. For example,  to ensure $\Pt \geq 0.75$, from \eqref{eq:37}, we obtain $\beta=2$. Note that a useful bound can only be obtained based on \eqref{eq:37} when $\beta>1$.
	\end{remark}

	\subsection{Distribution Functions of the CIR} \label{sub4:2}
	
	The PDF of the CIR is  given in the following theorem.
	
	\begin{theorem} \label{theo:3}
		The PDF of the impulse response of a time-variant  channel  with diffusive molecules transmitted by a diffusive transparent transmitter and absorbed by an  absorbing receiver  is given by
		\begin{align} \label{eq:25}
		\begin{cases}
		f_{h\left(r(t),\tau\right)}(h)=\frac{f_{r(t)}(r_1(h))}{h'(r_1(h),\tau)}-\frac{f_{r(t)}(r_2(h))}{h'(r_2(h),\tau)}, &\text{for } 0 \leq h < h^\star,\\
		f_{h\left(r(t),\tau\right)}(h)\rightarrow \infty, &\text{for } h=h^\star,
		\end{cases}
		\end{align}
		where $h\left(r(t),\tau\right)$ denotes $h\left(t,\tau\right)$ as a function of $r(t)$ and $\tau$, $f_{r(t)}(r)$ is given by \eqref{eq:5},   $r_1(h)$ and $r_2(h)$, $r_1(h)< r_2(h)$, are the solutions of the equation $h\left(r(t),\tau\right)=h$, $h^\star$ is the maximum value of $h\left(r(t),\tau\right)$ for all values of $r(t)$,
		and $h'(r,\tau) $ is given by
		\begin{align} \label{eq:27}
		h'(r,\tau)=&\frac{\arx}{\sqrt{4\pi D_\mathrm{X} \tau^3}}\exp\left(-\frac{\left(r-\arx\right)^2}{4 D_\mathrm{X} \tau}\right)\\ \nonumber
		&\times \left(\frac{\arx}{r^2}-\frac{\left(r-\arx\right)}{2D_\mathrm{X} \tau}\left(1-\frac{\arx}{r}\right)\right).
		\end{align}
	\end{theorem}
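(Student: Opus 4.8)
The plan is to regard $h$, for a fixed delay $\tau$, as a deterministic function of the single random variable $r(t)$ and to apply the change-of-variables formula for functions of a random variable, in its multi-branch form \cite{PP:02:Book}. First I would analyze the shape of $r\mapsto h(r,\tau)$ in \eqref{eq:6}. Restricting to the physically relevant range $r\geq\arx$ (on $r<\arx$ the expression \eqref{eq:6} is negative and not meaningful, and by Remark~\ref{re:1} the density \eqref{eq:5} is negligible there anyway, so this event is neglected consistently), one has $h(\arx,\tau)=0$, $h(r,\tau)>0$ for $r>\arx$, and $h(r,\tau)\to0$ as $r\to\infty$. Differentiating \eqref{eq:6} with respect to $r$ gives \eqref{eq:27}; after factoring out the strictly positive prefactor $\arx/\sqrt{4\pi D_\mathrm{X}\tau^3}$ and the exponential, the stationarity condition $h'(r,\tau)=0$ reduces, upon clearing denominators, to $r\,(r-\arx)^2=2D_\mathrm{X}\tau\,\arx$. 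The left-hand side vanishes at $r=\arx$ and is strictly increasing for $r>\arx$ (its derivative is $(r-\arx)(3r-\arx)>0$), so there is a unique stationary point $r^\star>\arx$; hence $h(\cdot,\tau)$ is strictly increasing on $(\arx,r^\star)$, strictly decreasing on $(r^\star,\infty)$, with global maximum $h^\star:=h(r^\star,\tau)$. Consequently, for each $h\in(0,h^\star)$ the equation $h(r,\tau)=h$ has exactly two roots $r_1(h)<r^\star<r_2(h)$.

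Next I would invoke the multi-branch transformation theorem: since $h(\cdot,\tau)$ is $C^1$ and piecewise strictly monotone on $(\arx,\infty)$, the density of $H:=h(r(t),\tau)$ is $f_H(h)=\sum_j f_{r(t)}(r_j(h))/|h'(r_j(h),\tau)|$, the sum running over the preimages. On the increasing branch $h'(r_1(h),\tau)>0$ while on the decreasing branch $h'(r_2(h),\tau)<0$, so $|h'(r_2(h),\tau)|=-h'(r_2(h),\tau)$; substituting \eqref{eq:5} and \eqref{eq:27} then yields precisely the first line of \eqref{eq:25} for $0\le h<h^\star$. For $h=h^\star$ I would note that as $h\uparrow h^\star$ both roots converge to $r^\star$ and $h'(r_j(h),\tau)\to h'(r^\star,\tau)=0$, while $f_{r(t)}(r^\star)>0$; since both terms in \eqref{eq:25} are positive they each diverge, so $f_H(h)\to\infty$, which is the second line of \eqref{eq:25}. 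A short remark that the maximum at $r^\star$ is non-degenerate, so that $h^\star-h$ vanishes quadratically in $r-r^\star$ and the singularity is of integrable order $(h^\star-h)^{-1/2}$, confirms that \eqref{eq:25} indeed defines a valid PDF.

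The main obstacle I anticipate is the unimodality claim — that \eqref{eq:27} has exactly one zero on $(\arx,\infty)$, so that there are exactly two branches and no extra preimages; the reduction to $r(r-\arx)^2=2D_\mathrm{X}\tau\,\arx$ together with the monotonicity of its left-hand side settles this, but care is needed to isolate the sign-determining bracket in \eqref{eq:27} before the analysis. The only other delicate point is bookkeeping at the domain boundary $r=\arx$ (and at $r\to\infty$), where the two roots of $h(r,\tau)=h$ degenerate to a single endpoint as $h\to0$; stating explicitly that we work on $r\ge\arx$ and neglect $\{r(t)<\arx\}$ per Remark~\ref{re:1} keeps the two-root picture valid throughout $(0,h^\star)$.
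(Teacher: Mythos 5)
Your proof is correct and follows the same overall skeleton as the paper's: establish that $r\mapsto h(r,\tau)$ is unimodal on the relevant domain, so that each level $h\in(0,h^\star)$ has exactly two preimages $r_1(h)<r^\star<r_2(h)$, and then apply the multi-branch change-of-variables formula of \cite[Eq.~5-16]{PP:02:Book}. Where you genuinely differ is in the unimodality sub-argument. The paper writes $h'(r,\tau)=0$ as a cubic in $r$ and asserts that its discriminant is negative, hence a unique real root $r^\star$. You instead reduce the stationarity condition to $r(r-\arx)^2=2D_\mathrm{X}\tau\,\arx$ and observe that the left-hand side is strictly increasing on $(\arx,\infty)$. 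Your version is the more robust one: the discriminant of the cubic $r^3-2\arx r^2+\arx^2 r-2D_\mathrm{X}\tau\arx$ evaluates to $4D_\mathrm{X}\tau\arx^2\left(2\arx^2-27D_\mathrm{X}\tau\right)$, which is \emph{positive} whenever $\tau<2\arx^2/(27D_\mathrm{X})$, in which case the cubic has three real roots. The two extra roots lie in $(0,\arx)$, where $h<0$ and the model is unphysical, so they are harmless --- but only if, as you do, one explicitly restricts attention to $r\geq\arx$; the paper's blanket discriminant claim does not hold for all $\tau$. Your explicit tracking of the sign of $h'$ on the two branches (which accounts for the minus sign in front of the $r_2$ term in \eqref{eq:25}), your argument for the divergence at $h=h^\star$, and your check that the singularity is of integrable order $(h^\star-h)^{-1/2}$ are all details the paper states or implies without proof; they buy a self-contained verification that \eqref{eq:25} is indeed a valid density.
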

	
	\begin{proof}
		From \eqref{eq:27}, we observe that $h'(r,\tau)=\frac{\partial h(r,\tau)}{\partial r}=0$ is equivalent to  a cubic equation in $r$, given by  $a r^3+b r^2+ c r +d=0$, with properly defined coefficients $a$, $b$, $c$, $d$ and  discriminant $\Delta=18abcd-4b^3 d+b^2 c^2-4a c^3-27a^2 d^2<0$. Hence, $h'(r,\tau)=0$ has only one real valued solution, denoted by $r^\star$.  Then, from \eqref{eq:27}, we obtain that $h'(r_1,\tau)>0$ for $r_1<r^\star$, $h'(r_2,\tau)<0$ for $r_2>r^\star$, and $h'(r,\tau)=0$ for $h=h^\star$, where $h^\star=h\left(r^\star,\tau\right)$ is the maximum value of $h(r,\tau)$. Finally, we derive \eqref{eq:25} by exploiting \cite[Eq.~5-16]{PP:02:Book} for the PDF of  functions of  random variables.  
	\end{proof}

	The CDF of $h(t,\tau)$ is given in the following corollary.
	\begin{corollary} \label{theo:4}
		The CDF of the impulse response of a time-variant  channel  with diffusive molecules transmitted by a diffusive transparent transmitter and absorbed by an absorbing receiver  is given by
		\begin{align} \label{eq:28}
		F_{h\left(r(t),\tau\right)}(h)=F_{r(t)}(r_1(h))+1-F_{r(t)}(r_2(h)), 
		\end{align}
		for $0 \leq h \leq h^\star$, where $F_{r(t)}(r)$ is the CDF of $r$ and is given by
		\begin{align} \label{eq:29}
		F_{r(t)}(r)=1-\mathbf{Q}_{\frac{3}{2}}\left(\lambda,\frac{r}{\sqrt{2D_\Tx t}}\right).
		\end{align}
		Here, $\lambda$ is defined in \eqref{eq:4} and $\mathbf{Q}_{M}\left(a,b\right)$ is the  Marcum Q-function as defined in \cite{Rob:69:Bel}.
	\end{corollary}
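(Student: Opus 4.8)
The plan is to prove the corollary in two steps: first, relate the CDF of $h(t,\tau)$ to the CDF of $r(t)$ by exploiting the shape of the map $r\mapsto h(r,\tau)$ already established in Theorem~\ref{theo:3}; second, identify the CDF of $r(t)$ with a Marcum Q-function via the noncentral chi distribution in~\eqref{eq:4}.

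For the first step, I would recall from the proof of Theorem~\ref{theo:3} that, for fixed $\tau$, the function $h(r,\tau)$ is strictly increasing on $[0,r^\star]$ and strictly decreasing on $[r^\star,\infty)$, with $h^\star=h(r^\star,\tau)$ its global maximum. Hence for every $h\in[0,h^\star]$ the sublevel set $\left\{r\geq 0 : h(r,\tau)\leq h\right\}$ equals the disjoint union $[0,r_1(h)]\cup[r_2(h),\infty)$, where $r_1(h)\leq r^\star\leq r_2(h)$ are the (possibly coinciding) roots of $h(r,\tau)=h$. Therefore
\begin{align*}
F_{h(r(t),\tau)}(h)&=\mathrm{Pr}\left\{r(t)\leq r_1(h)\right\}+\mathrm{Pr}\left\{r(t)\geq r_2(h)\right\}\\
&=F_{r(t)}(r_1(h))+1-F_{r(t)}(r_2(h)),
\end{align*}
which is~\eqref{eq:28}. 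The same identity also follows by integrating the PDF~\eqref{eq:25} from $0$ to $h$: on the increasing branch the change of variables $u=h(r,\tau)$ maps the $u$-interval $[0,h]$ to the $r$-interval $[0,r_1(h)]$ and the first term of~\eqref{eq:25} integrates to $F_{r(t)}(r_1(h))$, while on the decreasing branch (where $h'(r,\tau)<0$) the same change of variables maps $[0,h]$ to $[r_2(h),\infty)$ with reversed orientation, and the accompanying sign flip in the second term of~\eqref{eq:25} yields $1-F_{r(t)}(r_2(h))$.

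For the second step, I would use that $\gamma=r(t)/\sqrt{2D_\Tx t}$ follows the noncentral chi distribution $\mathcal{X}_3(\lambda)$ with $\lambda=r_0/\sqrt{2D_\Tx t}$, as stated in~\eqref{eq:4}. The CDF of a noncentral chi random variable with $k$ degrees of freedom and noncentrality parameter $\lambda$ equals $1-\mathbf{Q}_{k/2}(\lambda,\cdot)$ \cite{Rob:69:Bel}. Taking $k=3$ gives $F_{r(t)}(r)=\mathrm{Pr}\left\{r(t)\leq r\right\}=\mathrm{Pr}\left\{\gamma\leq r/\sqrt{2D_\Tx t}\right\}=1-\mathbf{Q}_{3/2}\left(\lambda,r/\sqrt{2D_\Tx t}\right)$, which is~\eqref{eq:29}; substituting this into~\eqref{eq:28} completes the proof. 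I expect the only subtle point to be the bookkeeping in the first step---correctly decomposing the preimage of $[0,h]$ under the two-to-one map $h(\cdot,\tau)$ and tracking the orientation of the integration limits on the decreasing branch---whereas the identification of the noncentral chi CDF with the Marcum Q-function in the second step is standard.
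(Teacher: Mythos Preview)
Your proposal is correct and follows essentially the same route as the paper: the paper integrates the PDF in~\eqref{eq:25} with the change of variables $u=h(r,\tau)$ on each branch to obtain~\eqref{eq:28}, then invokes the noncentral chi distribution of $r(t)/\sqrt{2D_\Tx t}$ to obtain~\eqref{eq:29} via the Marcum Q-function---exactly your second derivation of the first step and your second step. Your additional direct sublevel-set argument for~\eqref{eq:28} is a slightly cleaner variant of the same idea and is also fine.
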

	
	\begin{proof}
		From the definition of the CDF and \eqref{eq:25}, we have
		\begin{align} \label{eq:30}
		F_{h\left(r(t),\tau\right)}(h)=&\int_0^{h} f_{h\left(r(t),\tau\right)}(\check{h})\diff \check{h} \\ \nonumber
		=&\int_0^{h} \frac{f_{r(t)}(\check{r}_1(\check{h}))}{ \partial h(\check{r}_1,\tau)/\partial \check{r}_1}-\frac{f_{r(t)}(\check{r}_2(\check{h}))}{ \partial h(\check{r}_2,\tau)/\partial \check{r}_2}\diff\check{h}\\ \nonumber
		=&\int_0^{{r}_1(h)} f_{r(t)}(\check{r}_1)\diff\check{r}_1-\int_\infty^{{r}_2(h)}f_{r(t)}(\check{r}_2) \diff\check{r}_2\\ \nonumber
		=&F_{r(t)}(r_1(h))+1-F_{r(t)}(r_2(h)),
		\end{align}
		where $\check{r}_1$ and $\check{r}_2$, $\check{r}_1<\check{r}_2$, are the solutions of the equation $h\left(\check{r},\tau\right)=\check{h}$.
		Moreover, since $\frac{r(t)}{\sqrt{2D_\Tx t}}$ follows a noncentral chi distribution, we obtain \eqref{eq:29} as \cite{Rob:69:Bel}
		\begin{align} \label{eq:31}
		F_{r(t)}(r)=F_{\frac{r(t)}{\sqrt{2D_\Tx t}}}\left(\frac{r}{\sqrt{2D_\Tx t}}\right)=1-\mathbf{Q}_{\frac{3}{2}}\left(\lambda,\frac{r}{\sqrt{2D_\Tx t}}\right).
		\end{align}
		This completes the proof.
	\end{proof}

	We note that the analytical expressions for the PDF and CDF of $h(t,\tau)$ in Theorem~\ref{theo:3} and Corollary~\ref{theo:4}, respectively, are not in closed form. Therefore, the evaluation of the system performance in \eqref{eq:38} can be approximated by a discrete convolution which is easily evaluated numerically.
	
	\begin{remark}
		The results for the mean, variance, PDF, and CDF  of the CIR in Sections~\ref{sub3:3} and \ref{sub4:2} can also be applied for  applications where both the transmitter and the receiver undergo diffusion. In this case, we have to replace  $D_\mathrm{X}$ and $D_\Tx$ in the derived expressions by $D_1=D_\mathrm{X}+D_\Rx$ and $D_2=D_\Tx+D_\Rx$, respectively. $D_1$ and $D_2$ are the effective diffusion coefficients capturing the relative movements of the molecules $\mathrm{X}$ and the $\Rx$ and the relative movements of the $\Tx$ and the $\Rx$, respectively, see \cite{AJS:18:COMT}.
	\end{remark}  
	
	\section{Numerical Results} \label{sec:5}

	In this section, we provide  numerical results to evaluate the accuracy of the derived expressions and the efficiency of the proposed drug delivery system.  In the simulations, we use a particle-based simulation of Brownian motion, where the transmitter performs a random displacement in  discrete time steps of length $\Delta t^{\mathrm{st}}$ seconds. The random displacement of  the transmitter  in each step is modeled as a Gaussian random variable with zero mean and standard deviation  $\sqrt{2D_\Tx\Delta t^{\mathrm{st}}}$. Furthermore, in the simulations, we also take into account the reflection of the  $\Tx$ when the $\Tx$ hits the $\Rx$. Moreover, we adopt Monte-Carlo simulation   by averaging our results over a large number of independent realizations of the $\Tx$ movement.

	For all numerical results, we use the set of simulation parameters in Table~\ref{tab:1}, unless otherwise stated. The parameters in Table~\ref{tab:1} are chosen to match real system parameters, e.g. the diffusion constants $D_\mathrm{X}$ of drug molecules vary from $10^{-9}$ to $10^{-14}$ \si{\metre^2\per \second} \cite{WCX:14:PCL}, the drug carriers have  sizes $\geq \SI{100}{\nano\meter}$ \cite{PNJ:99:BPJ}, the size of tumor cells is on the order of $\si{\micro\metre}$, and the drug carriers can be injected or extravasated from the cardiovascular system in the tissue surrounding the targeted diseased cell site \cite{LEELYP:15:CES}, i.e., close to  the tumor cells. The dosing periods in drug delivery systems are on the order of days \cite{ARI:06:ADD}, i.e., $\SI{24}{\hour}$. For simplicity, we set $\Delta t_i=\TTx/I, \forall i \in \{1,\dots,I\}$, and $N=5$, and  the value of the required absorption rate is set to $\theta(t)=\SI{1}{\second^{-1}}$.   We choose $I$   relatively large to obtain small intervals $\Delta t_i$. 
	All simulation results are averaged over $10^5$ independent realizations of the environment.
	
	\begin{table}[t!]
		\caption{System parameters used for numerical results}
		\centering
		\label{tab:1}
		\begin{tabular}{c|c||c|c}
			\toprule
			Parameter	&	Value	&	Parameter	&	Value \\[-0.05cm]
			\midrule
			$D_\mathrm{X}$ [$\mathrm{m^2/s}$]	&	$8\times 10^{-11}$		&	$D_\Rx$ [$\mathrm{m^2/s}$]	&	$0$ 	\\
			$\atx$ [$\mathrm{m}$]	&	$1\times 10^{-7}$&	$\arx$ [$\mathrm{m}$]	&	$1\times 10^{-6}$		\\
			$\TTx$ [$\si{\hour}$]	&	$24$& $\TRx$ [$\si{\hour}$]	 &	$24$	\\
			$r_0$ [$\mathrm{m}$]	&	$10\times 10^{-6}$&	$N$	&	$5$\\
			$I$	&	$3000$ & $\theta(t) [\si{\second^{-1}}]$ & $1$\\
			\bottomrule
		\end{tabular}
	\end{table}

	\begin{figure}[t!]
		\centering
		\includegraphics[width=0.39\textwidth]{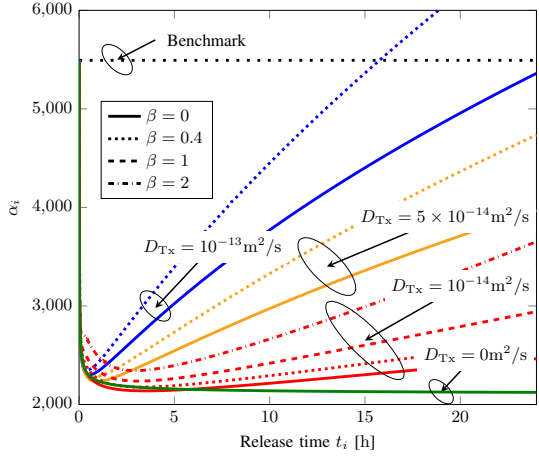}
		\caption{
			Optimal release coefficients $\alpha_i$ as a function of release time $t_i$ [\si{\hour}] for different system parameters. The black horizontal dotted line is the benchmark when the $\alpha_i$ are not optimized. 
		}
		\label{fig:4}
	\end{figure}

	In Fig.~\ref{fig:4}, we plot the controlled-release coefficients $\alpha_i$ versus the corresponding release time $t_i$ [$\si{\hour}$] for different system parameters. The  coefficients are obtained by solving the optimization problem in \eqref{eq:36} with $\beta=\{0,0.4,1,2\}$ for $D_\Tx=10^{-14}$ \si{\metre^2\per \second} and $\beta=\{0,0.4\}$ for $D_\Tx=\{5,10\}\times10^{-14}$ \si{\metre^2\per \second}. As mentioned  in the discussion of \eqref{eq:35}, we cannot  choose large values of $\beta$ when the diffusion coefficient is large, i.e., the standard deviation is large, as the problem may become infeasible. Fig.~\ref{fig:4} shows that for all considered parameter settings, we should first   release a large number of molecules for the absorption rate to exceed the threshold. Then, in the static system with  $D_\Tx= \SI{0}{\metre^2\per \second}$, the optimal coefficient  decreases with increasing time, since a fraction of the molecules previously released   from the $\Tx$ linger around the $\Rx$ and are absorbed later. However, for the mobile time-variant channels, the $\Tx$ eventually diffuses away from the $\Rx$ as time  $t$ increases and hence,  molecules released at later times by the $\Tx$  are far away from the $\Rx$ and may not reach the $\Rx$. Therefore, at later times, the  amount of drugs released has to be increased  for the absorption rate to not fall below the threshold. For higher $D_\Tx$, the $\Tx$ diffuses away from the $\Rx$ faster and thus, the coefficients $\alpha_i$ have to increase faster. This type of drug release is called a tri-phasic release \cite{FWR:11:JP}. Once we have designed the controlled-release profile, we can implement this by choosing a suitable drug carrier as shown in \cite{FWR:11:JP}.  Moreover, as expected, with larger $\beta$, we need to release more drugs to ensure  that \eqref{eq:36} is feasible. The black horizontal dotted line in Fig.~\ref{fig:4} is a benchmark where the $\alpha_i, \forall i,$ are not optimized but naively set to $\alpha_i=\alpha_1$. For this naive design, $A=\alpha_1 I\approx 1.65\times 10^7$, whereas with the optimal $\alpha_i$, for $\beta=0$ and $D_\Tx=10^{-13}$ [\si{\metre^2\per \second}], $A=1.2 \times 10^7$, i.e., equal to $73\%$ of  the naive design, and for $\beta=1$ and $D_\Tx=10^{-14}$ [\si{\metre^2\per \second}], $A=7.6 \times 10^6$, i.e., equal to $46\%$ of the naive design. This highlights that applying the optimal controlled-release profile can save significant amounts of drugs and still satisfy the therapeutic requirements.  Moreover, as observed in Fig.~\ref{fig:4}, the required values of $\alpha_i$ increase as $t_i$ increases and thus the naive design with  fixed $\alpha_i$, i.e., the benchmark, cannot ultimately satisfy the required absorption rate.
	
	\begin{figure}[t!]
		\centering
		\includegraphics[width=0.39\textwidth]{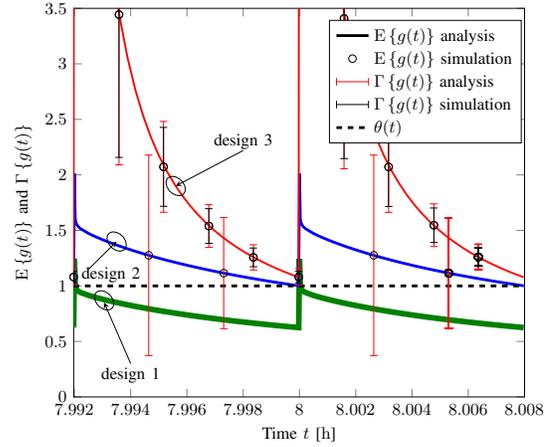}
		\caption{
			$\mathrm{E}\left\{g(t)\right\}$ and $\Gamma\left\{g(t)\right\}$ between the 1000-th release and the 1002-th release, i.e., at about \SI{8}{\hour}, for three different designs. Design 1 (green line): naive design without considering $\Tx$ movement  with $D_\Tx=10^{-13}$ \si{\metre^2\per \second} and $\beta=0$; design 2 (blue line) and 3 (red line): optimal design with $\left(D_\Tx[\si{\metre^2\per \second}],\beta\right)=\left(10^{-13},0\right),\left(10^{-14},1\right)$, respectively.
		}
		\label{fig:5}
	\end{figure}

	In Fig.~\ref{fig:5}, we plot the mean and standard deviation of the absorption rate,	$\mathrm{E}\left\{g(t)\right\}$ and $\Gamma\left\{g(t)\right\}$, between the  $1000$-th release and the $1002$-th release for three designs, where we adopted $D_\Tx=10^{-13}$ \si{\metre^2\per \second} and $\beta=0$ for designs 1 and 2, and $D_\Tx=10^{-14}$ \si{\metre^2\per \second} and $\beta=1$ for design 3. Note that the considered time window, e.g., between the  $1000$-th release and the $1002$-th release, is chosen arbitrarily in the middle of $\TTx$ to analyze the system behavior. For design 1, the $\Tx$ diffuses with $D_\Tx=10^{-13}$ \si{\metre^2\per \second} but the controlled release is  designed without accounting for the $\Tx$ mobility, i.e., the adopted $\alpha_i$ are given by the green line in Fig.~\ref{fig:4} obtained under the assumption of $D_\Tx= \SI{0}{\metre^2\per \second}$. For designs 2 and 3,  the mobility of the $\Tx$ is taken into account. The black dashed line marks the threshold $\theta(t)$ that $g(t)$ should not fall below. It is observed from Fig.~\ref{fig:5} that   when the $\Tx$ diffuses but the design does not take into account the mobility,	 the  requirement that the expected  absorption rate,  $\mathrm{E}\left\{g(t)\right\}$, exceeds $\theta(t)$, is not satisfied for most of the time. For design 2 with $\beta=0$, we observe that $\mathrm{E}\left\{g(t)\right\}>\theta(t)$ always holds but  $\mathrm{E}\left\{g(t)\right\}-\Gamma\left\{g(t)\right\}>\theta(t)$ does not always hold. For design 3 with $\beta=1$, we  observe  that $\mathrm{E}\left\{g(t)\right\}-\Gamma\left\{g(t)\right\}>\theta(t)$  always holds since $\beta>0$ enforces a gap between $\mathrm{E}\left\{g(t)\right\}$ and $\theta(t)$. In other words, even if $g(t)$ deviates from the mean, it can still exceed $\theta(t)$. Fig.~\ref{fig:5} also shows that $\mathrm{E}\left\{g(t)\right\}$  first increases after a release and then decreases, due to the diffusion of the molecules.   Furthermore, Fig.~\ref{fig:5}  confirms the accuracy of our derivations as  the simulation  results match the analytical results.  Note that in the  simulations, unlike the analysis, we have considered the reflection of the $\Tx$ when it hits the $\Rx$. Therefore, the good agreement in Fig.~\ref{fig:5} suggests that the reflection of the $\Tx$ does not have a significant impact on the numerical results and the approximation in \eqref{eq:5} is valid.
	
	In Fig.~\ref{fig:6}, we present the system performance in terms of $\Pt$,  for the time period between the $1000$-th and $1002$-th releases, i.e., at about $8 \si{\hour}$. The lines and markers denote  simulation and analytical results, respectively. Fig.~\ref{fig:6} shows a good agreement between the analytical and simulation results.  In Fig.~\ref{fig:6}, we observe that  $\Pt$ increases with increasing $\beta$ because the design for larger $\beta$ enforces a larger gap between $\mathrm{E}\left\{g(t)\right\}$ and $\theta(t)$, as can be seen in Fig.~\ref{fig:5}. Moreover, for a given $\beta$, $\Pt$ will be different for different $D_\Tx$. In particular, for larger $D_\Tx$, $\Pt$ is smaller due to the faster diffusion and less certainty about the CIR. Moreover, in Fig.~\ref{fig:6}, the green line shows that  the naive design, i.e., design 1 in Fig.~\ref{fig:5}, has very poor  performance. In  Fig.~\ref{fig:6}, we also observe that between two releases, $\Pt$ first increases due to the released drugs and then decreases due to drug diffusion.
	Furthermore, in Fig.~\ref{fig:6}, we show the lower bound on $\Pt$ derived in Proposition~\ref{prop:1} for $D_\Tx=10^{-14}$ \si{\metre^2\per \second} and $\beta=2$, where \eqref{eq:37} yields $\Pt\geq 0.75$. Fig.~\ref{fig:6} shows that the red dash-dotted line, i.e., $\Pt$ for $D_\Tx=10^{-14}$ \si{\metre^2\per \second} and $\beta=2$, is above the horizontal black dashed line, i.e., $\Pt=0.75$.
	
	\begin{figure}[t!]
		\centering
		\includegraphics[width=0.39\textwidth]{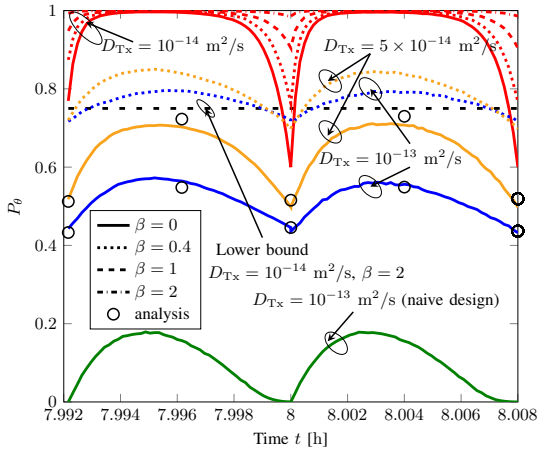}
		\caption{
			$\Pt$  as a function of time $t$ [$\si{\hour}$]  between 1000-th release and 1002-th release, i.e., at about $\SI{8}{\hour}$. 
		}
		\label{fig:6}
	\end{figure}
	
	\section{Conclusions}
	In this paper, we considered a drug delivery system with a diffusive drug carrier and  absorbing cells and modeled it as a time-variant channel between  diffusive MC transceivers. We provided a statistical analysis of the time-variant CIR. Based on this statistical analysis,  we designed the optimal controlled-release profile which minimizes the amount of released drugs while ensuring a targeted absorption rate of the drugs at the $\Rx$ for a prescribed time period. The probability of satisfying the constraint on the absorption rate was adopted as a system performance criterion and was evaluated. We observed that ignoring the reality of $\Tx$ mobility in designing the release profile leads to unsatisfactory performance.
	
	\bibliographystyle{IEEEtran}
	\bibliography{IEEEabrv,MC}

\end{document}